\newcommand{\appendixref}[1]{\hyperref[#1]{Appendix~\ref*{#1}}}
\newcommand{\ignore}[1]{}
\DeclareMathOperator{\E}{\mathbb{E}}
\title{A Unified Evaluation of Two-Candidate Ballot-Polling Election
Auditing Methods\thanks{An earlier version of this manuscript was published in:
Krimmer R. et al.\ (eds) Electronic Voting. E-Vote-ID 2020. Lecture Notes in
Computer Science, vol 12455. Springer, Cham.
\url{https://doi.org/10.1007/978-3-030-60347-2_8}.  The current version
includes some corrections to the main text and two extensive technical
appendices.}}
\titlerunning{Evaluation of Two-Candidate Ballot-Polling Election Auditing
Methods}
\author{
Zhuoqun Huang    \inst{1}                                  \and
Ronald L. Rivest \inst{2}   \orcidID{0000-0002-7105-3690}  \and
Philip B. Stark  \inst{3}   \orcidID{0000-0002-3771-9604}  \and
Vanessa Teague   \inst{4,5} \orcidID{0000-0003-2648-2565}  \and
Damjan Vukcevic  \inst{1,6} \orcidID{0000-0001-7780-9586}}
\authorrunning{Huang et al.}
\institute{
School of Mathematics and Statistics,
University of Melbourne, Parkville, Australia
\and
Computer Science \& Artificial Intelligence Laboratory, Massachusetts Institute
of Technology, USA
\and
Department of Statistics, University of California, Berkeley, USA
\and
Thinking Cybersecurity Pty.\ Ltd.
\and
College of Engineering and Computer Science, Australian National University
\and
Melbourne Integrative Genomics,
University of Melbourne, Parkville, Australia \\
\email{damjan.vukcevic@unimelb.edu.au}}
\begin{document}

\maketitle

\begin{abstract}
Counting votes is complex and error-prone.  Several statistical methods have
been developed to assess election accuracy by manually inspecting randomly
selected physical ballots.  Two `principled' methods are risk-limiting audits
(RLAs) and Bayesian audits (BAs).  RLAs use frequentist statistical inference
while BAs are based on Bayesian inference.  Until recently, the two have been
thought of as fundamentally different.

We present results that unify and shed light upon `ballot-polling' RLAs and BAs
(which only require the ability to sample uniformly at random from all cast
ballot cards) for two-candidate plurality contests, which are building blocks
for auditing more complex social choice functions, including some preferential
voting systems.  We highlight the connections between the methods and explore
their performance.

First, building on a previous demonstration of the mathematical equivalence of
classical and Bayesian approaches, we show that BAs, suitably calibrated, are
risk-limiting.  Second, we compare the efficiency of the methods across a wide
range of contest sizes and margins, focusing on the distribution of sample
sizes required to attain a given risk limit.  Third, we outline several ways to
improve performance and show how the mathematical equivalence explains the
improvements.

\keywords{Statistical audit \and Risk-limiting \and Bayesian}
\end{abstract}


\section{Introduction}

Even if voters verify their ballots and the ballots are kept secure, the
counting process is prone to errors from malfunction, human error, and
malicious intervention.  For this reason, the US National Academy of Sciences
\cite{securing2018} and the American Statistical
Association\footnote{\url{https://www.amstat.org/asa/files/pdfs/POL-ASARecommendsRisk-LimitingAudits.pdf}}
have recommended the use of risk-limiting audits to check reported election
outcomes.

The simplest audit is a manual recount, which is usually expensive and
time-consuming.  An alternative is to examine a random sample of the ballots
and test the result statistically.  Unless the margin is narrow, a sample far
smaller than the whole election may suffice.  For more efficiency, sampling can
be done adaptively: stop when there is strong evidence supporting the reported
outcome \cite{stark08a}.

Risk-limiting audits (RLAs) have become the audit method recommended for use in
the USA.  Pilot RLAs have been conducted for more than 50~elections in 14~US
states and Denmark since 2008.  Some early pilots are discussed in a report
from the California Secretary of State to the US Election Assistance
Commission.\footnote{\url{https://votingsystems.cdn.sos.ca.gov/oversight/risk-pilot/final-report-073014.pdf}}
In 2017, the state of Colorado became the first to complete a statewide
RLA.\footnote{\url{https://www.denverpost.com/2017/11/22/colorado-election-audit-complete/}}
The defining feature of RLAs is that, if the reported outcome is incorrect,
they have a large, pre-specified minimum probability of discovering this and
correcting the outcome.  Conversely, if the reported outcome is correct, then
they will eventually certify the result.  This might require only a small
random sample, but the audit may lead to a complete manual tabulation of the
votes if the result is very close or if tabulation error was an appreciable
fraction of the margin.

RLAs exploit frequentist statistical hypothesis testing.  There are by now more
than half a dozen different approaches to conducting RLAs \cite{shangrla}.
Election audits can also be based on Bayesian inference
\cite{RivestAElections}.

With so many methods, it may be hard to understand how they relate to each
other, which perform better, which are risk-limiting, etc.  Here, we review and
compare the statistical properties of existing methods in the simplest case: a
two-candidate, first-past-the-post contest with no invalid ballots.  This
allows us to survey a wide range of methods and more clearly describe the
connections and differences between them.  Most real elections have more than
two candidates, of course.  However, the methods designed for this simple
context are often adapted for more complex elections by reducing them into
pairwise contests (see below for further discussion of this point).  Therefore,
while we only explore a simple scenario, it sheds light on how the various
approaches compare, which may inform future developments in more complex
scenarios.  There are many other aspects to auditing that matter greatly in
practice, we do not attempt to cover all of these but we comment on some below.

For two-candidate, no-invalid-vote contests, we explain the connections and
differences among many audit methods, including frequentist and Bayesian
approaches.  We evaluate their efficiency across a range of election sizes and
margins.  We also explore some natural extensions and variations of the
methods.  We ensure that the comparisons are `fair' by numerically calibrating
each method to attain a specified risk limit.

We focus on \emph{ballot-polling audits}, which involve selecting ballots at
random from the pool of cast ballots.  Each sampled ballot is interpreted
manually; those interpretations comprise the audit data.  (Ballot-polling
audits do not rely on the voting system's interpretation of ballots, in
contrast to \emph{comparison audits}.)

\textbf{Paper outline:}
\autoref{sec:definitions} provides context and notation.
\autoref{sec:auditing_methods} sketches the auditing methods we consider and
points out the relationships among them and to other statistical methods.
\autoref{sec:evaluating_methods} explains how we evaluate these methods.  Our
benchmarking experiments are reported in \autoref{sec:results}.  We finish with
a discussion and suggestions for future work in \autoref{sec:discussion}.


\section{Context and notation: two-candidate contests}
\label{sec:definitions}

We consider contests between two candidates, where each voter votes for exactly
one candidate.  The candidate who receives more votes wins.  Ties are possible
if the number of ballots is even.

Real elections may have invalid votes, for example, ballots marked in favour of
both candidates or neither; for multipage ballots, not every ballot paper
contains every contest.  Here we assume every ballot has a valid vote for one
of the two candidates.  See \autoref{sec:discussion}.

Most elections have more than two candidates and can involve complex algorithms
(`social choice functions') for determining who won.  A common tactic for
auditing these is to reduce them to a set of pairwise contests such that
certifying all of the contests suffices to confirm the reported outcome
\cite{Lindeman2012BRAVO:Outcomes,raire,shangrla}.  These contests can be
audited simultaneously using methods designed for two candidates that can
accommodate invalid ballots, which most of the methods considered below do.
Therefore, the methods we evaluate form the building blocks for many of the
more complex methods, so our results are more widely relevant.

We do not consider \emph{stratified audits}, which account for ballots cast
across different locations or by different voting methods within the same
election.

\subsection{Ballot-polling audits for two-candidate contests}

We use the terms `ballot' and `ballot card' interchangeably, even though
typical ballots in the US consist of more than one card (and the distinction
does matter for workload and for auditing methods).  We consider unweighted
\emph{ballot-polling} audits, which require only the ability to sample
uniformly at random from all ballot cards.

The sampling is typically sequential.  We draw an initial sample and assess the
evidence for or against the reported outcome.  If there is sufficient evidence
that the reported outcome is correct, we stop and `certify' the winner.
Otherwise, we inspect more ballots and try again, possibly continuing to a full
manual tabulation.  At any time, the auditor can chose to conduct a full hand
count rather than continue to sample at random.  That might occur if the work
of continuing the audit is anticipated to be higher than that of a full hand
count or if the audit data suggest that the reported outcome is wrong.  One
reasonable rule is to set a maximum sample size (number of draws, not
necessarily the number of distinct ballots) for the audit; if the sample
reaches that size but the outcome has not been confirmed, there is a full
manual tabulation.  The outcome according to that manual tabulation becomes
official.

There are many choices to be made, including:
\begin{description}
\item[How to assess evidence.] Each stage involves calculating a statistic from
    the sample.  What statistic do we use?  This is one key difference amongst
    auditing methods, see \autoref{sec:auditing_methods}.

\item[Threshold for evidence.]  The decision of whether to certify or keep
    sampling is done by comparing the statistic to a reference value.  Often
    the value is chosen such that it limits the probability of certifying the
    outcome if the outcome is wrong, i.e.\ limits the risk (see below).

\item[Sampling with or without replacement.]  Sampling may be done with or
    without replacement.  Sampling without replacement is more efficient;
    sampling with replacement often yields simpler mathematics.  The difference
    in efficiency is small unless a substantial fraction (e.g.\ 20\% or more)
    of the ballots are sampled.

\item[Sampling increments.]  By how much do we increase the sample size if the
    current sample does not confirm the outcome?  We could enlarge the sample
    one ballot at a time, but it is usually more efficient to have larger
    `rounds'.  The methods described here can accommodate rounds of any size.
\end{description}
We assume that the auditors read votes correctly, which generally requires
retrieving the correct ballots and correctly applying legal rules for
interpreting voters' marks.

\subsection{Notation}

Let $X_1, X_2, \dots \in \{0, 1\}$ denote the sampled ballots, with $X_i = 1$
representing a vote in favour of the reported winner and $X_i = 0$ a vote for
the reported loser.

Let $n$ denote the number of (not necessarily distinct) ballots sampled at a
given point in the audit, $m$ the maximum sample size (i.e.\ number of draws)
for the audit, and $N$ the total number of cast ballots.  We necessarily have
$n \leqslant m$ and if sampling without replacement we also have $m \leqslant
N$.

Each audit method summarizes the evidence in the sample using a statistic of
the form $S_n(X_1, X_2, \dots, X_n, n, m, N)$.  For brevity, we suppress $n$,
$m$ and $N$ in the notation.

Let $Y_n = \sum_{i = 1}^n X_i$ be the number of sampled ballots that are in
favour of the reported winner.  Since the ballots are by assumption
exchangeable, the statistics used by most methods can be written in terms of
$Y_n$.

Let $T$ be the \emph{true} total number of votes for the winner and $p_T = T /
N$ the true proportion of such votes.  Let $p_r$ be the \emph{reported}
proportion of votes for the winner.  We do not know $T$ nor $p_T$, and it is
not guaranteed that $p_r \simeq p_T$.

For sampling with replacement, conditional on $n$, $Y_n$ has a binomial
distribution with parameters $n$ and $p_T$.  For sampling without replacement,
conditional on $n$, $Y_n$ has a hypergeometric distribution with parameters
$n$, $T$ and $N$.

\subsection{Risk-limiting audits as hypothesis tests}
\label{sec:hypothesis-tests}

Risk-limiting audits amount to statistical hypothesis tests.  The null
hypothesis $H_0$ is that the reported winner(s) did \emph{not} really win.  The
alternative $H_1$ is that the reported winners really won.  For a single-winner
contest,
\begin{align*}
    H_0\colon & p_T \leqslant \tfrac{1}{2}, \quad \text{(reported winner is
                                                         false)} \\
    H_1\colon & p_T    >      \tfrac{1}{2}. \quad \text{(reported winner
                                                         is true)}
\end{align*}
If we reject $H_0$, we certify the election without a full manual tally.
The \emph{certification rate} is the probability of rejecting $H_0$.
Hypothesis tests are often characterized by their \emph{significance level}
(false positive rate) and \emph{power}.  Both have natural interpretations in
the context of election audits by reference to the certification rate.
The power is simply the certification rate when $H_1$ is true.  Higher power
reduces the chance of an unnecessary recount.
A false positive is a \emph{miscertification}: rejecting $H_0$ when in fact it
is true.  The probability of miscertification depends on $p_T$ and the audit
method, and is known as the \emph{risk} of the method.  In a two-candidate
plurality contest, the maximum possible risk is typically attained when $p_T =
\frac{1}{2}$.

For many auditing methods we can find an upper bound on the maximum possible
risk, and can also set their evidence threshold such that the risk is limited
to a given value.  Such an upper bound is referred to as a \emph{risk limit},
and methods for which this is possible are called \emph{risk-limiting}.  Some
methods are explicitly designed to have a convenient mechanism to set such a
bound, for example via a formula.  We call such methods \emph{automatically
risk-limiting}.

Audits with a sample size limit $m$ become full manual tabulations if they have
not stopped after drawing the $m$th ballot.  Such a tabulation is assumed to
find the correct outcome, so the power of a risk-limiting audit is 1.  We use
the term `power' informally to refer to the chance the audit stops after
drawing $m$ or fewer ballots.


\section{Election auditing methods}
\label{sec:auditing_methods}

We describe Bayesian audits in some detail because they provide a mathematical
framework for many (but not all) of the other methods.  We then describe the
other methods, many of which can be viewed as Bayesian audits for a specific
choice of the prior distribution.  Some of these connections were previously
described by \cite{Vora2019Risk-LimitingElections}.  These connections can shed
light on the performance or interpretation of the other methods.  However, our
benchmarking experiments are frequentist, even for the Bayesian audits (for
example, we calibrate the methods to limit the risk).

\autoref{tab:summary_methods} lists the methods described here; the parameters
of the methods are defined below.

\begin{table}[t]
\centering
\begin{threeparttable}
\caption{\textbf{Summary of auditing methods.}  The methods in the first part
of the table are benchmarked in this report.}
\label{tab:summary_methods}
\smallskip
\begin{tabular}{lcc}
\toprule
\textbf{Method}                     &
\textbf{Quantities to set} \quad{}  &
\textbf{Automatically risk-limiting} \\
\midrule
Bayesian             & $f(p)$     & --- \\
Bayesian (risk-max.) & $f(p), \text{for } p > 0.5$
                                  & \checkmark \\
BRAVO                & $p_1$      & \checkmark \\
MaxBRAVO             & None       & ---        \\
ClipAudit            & None       & ---\tnote{\textdagger} \\
\midrule
KMart                & $g(\gamma)$\tnote{\textdaggerdbl}
                                  & \checkmark  \\
Kaplan--Wald         & $\gamma$   & \checkmark  \\
Kaplan--Markov       & $\gamma$   & \checkmark  \\
Kaplan--Kolmogorov   & $\gamma$   & \checkmark  \\
\bottomrule
\end{tabular}
\begin{tablenotes}
\item[\textdagger] Provides a pre-computed table for approximate risk-limiting
    thresholds
\item[\textdaggerdbl] Extension introduced here
\end{tablenotes}
\end{threeparttable}
\end{table}

\subsection{Bayesian audits}
\label{sec:bayesian}

Bayesian audits quantify evidence in the sample as a posterior distribution of
the proportion of votes in favour of the reported winner.  In turn, that
distribution induces a (posterior) probability that the outcome is wrong,
$\Pr(H_0 \mid Y_n)$, the \emph{upset probability}.

The posterior probabilities require positing a \emph{prior distribution}, $f$
for the reported winner's vote share $p$.  (For clarity, we denote the fraction
of votes for the reported winner by $p$ when we treat it as random for Bayesian
inference and by $p_T$ to refer to the actual true value.)

We represent the posterior using the posterior odds,
\[
   \frac{\Pr(H_1 \mid X_1, \dots, X_n)}
        {\Pr(H_0 \mid X_1, \dots, X_n)}
 = \frac{\Pr(X_1, \dots, X_n \mid H_1)}
        {\Pr(X_1, \dots, X_n \mid H_0)} \times
   \frac{\Pr(H_1)}
        {\Pr(H_0)}.
\]
The first term on the right is the \emph{Bayes factor} (BF) and the second is
the prior odds.  The prior odds do not depend on the data:  the information
from the data is in the BF.  We shall use the BF as the statistic, $S_n$.  It
can be expressed as,
\[
S_n = \frac{\Pr(X_1, \dots, X_n \mid H_1)}
           {\Pr(X_1, \dots, X_n \mid H_0)}
    = \frac{\int_{p    >      0.5} \Pr(Y_n \mid p) \, f(p) \, dp}
           {\int_{p \leqslant 0.5} \Pr(Y_n \mid p) \, f(p) \, dp}.
\]
The term $\Pr(Y_n \mid p)$ is the \emph{likelihood}.  The BF is similar to a
likelihood ratio, but the likelihoods are integrated over $p$ rather than
evaluated at specific values (in contrast to classical approaches, see
\autoref{sec:sprt}).
%

\subsubsection{Understanding priors.}

The prior $f$ determines the relative contributions of possible values of $p$
to the BF.  It can be continuous, discrete, or a combination of the two.  A
\emph{conjugate prior} is often used \cite{RivestAElections}, which has the
property that the posterior distribution is in the same family, which has
mathematical and practical advantages.  For sampling with replacement the
conjugate prior is beta (which is continuous), while for sampling without
replacement it is a beta-binomial (which is discrete).

Vora \cite{Vora2019Risk-LimitingElections} showed that a prior that places a
probability mass of 0.5 on the value $p = 0.5$ and the remaining mass on $(1/2,
1]$ is \emph{risk-maximizing}.  For such a prior, limiting the upset
probability to $\upsilon$ also limits the risk: for the specific type of
Bayesian audits considered by Vora \cite{Vora2019Risk-LimitingElections}, the
risk limit is $\upsilon$; however, for the Bayesian audits described here (see
below), the risk limit is $\upsilon / (1 - \upsilon) > \upsilon$.\footnote{%
These differences are described in more detail in \appendixref{app:bayes-rla}.
}

We explore several priors below, emphasizing a uniform prior (an example of a
`non-partisan prior' \cite{RivestAElections}), which is a special case within
the family of conjugate priors used here.

\subsubsection{Bayesian audit procedure.}

A Bayesian audit proceeds as follows.  At each stage of sampling, calculate
$S_n$ and then:
\begin{equation}
\label{eqn:bayesian_audit}
\left\{
\begin{array}{ll}
    \text{if } S_n    >      h,  &  \quad \text{terminate and certify,} \\
    \text{if } S_n \leqslant h,  &  \quad \text{continue sampling.} \\
\end{array}
\right.
\tag{*}
\end{equation}
If the audit does not terminate and certify for $n \leqslant m$, there is a
full manual tabulation of the votes.

The threshold $h$ is equivalent to a threshold on the upset probability:
$\Pr(H_0 \mid Y_n) < \upsilon$ corresponds to $h = \frac{1 -
\upsilon}{\upsilon} \frac{\Pr(H_0)}{\Pr(H_1)}$.  If the prior places equal
probability on the two hypotheses (a common choice), this simplifies to $h =
\frac{1 - \upsilon}{\upsilon}$.

\subsubsection{Interpretation.}

The upset probability, $\Pr(H_0 \mid Y_n)$, is \textbf{not} the risk, which is
$\Pr(\text{certify} \parallel p_T)$.  The procedure outlined above limits the
upset probability.  This is not the same as limiting the risk.  Nevertheless,
in the election context considered here, Bayesian audits are
risk-limiting,\footnote{This is a consequence of the fact that the risk of a
Bayesian audit is largest when $p_T = 0.5$, a fact that we can use to bound the
risk by choosing an appropriate value for the threshold.  The mathematical
details are shown in \appendixref{app:bayes-rla}.} with a risk limit that is
generally larger than the upset probability threshold.

For a given prior, sampling scheme, and risk limit $\alpha$, we can calculate a
value of $h$ for which the risk of the Bayesian audit with threshold $h$ is
bounded by $\alpha$.  For risk-maximizing priors, taking $h = 1 / \alpha$
(which is equivalent to a threshold of $\upsilon = \alpha / (1 + \alpha)$ on
the upset probability) yields an audit with risk limit $\alpha$.

\subsection{SPRT-based audits}
\label{sec:sprt}

The basic sequential probability ratio test (SPRT)
\cite{Wald1945SequentialHypotheses}, adapted slightly to suit the auditing
context here,\footnote{The SPRT allows rejection of either $H_0$ or $H_1$, but
we only allow the former here.  This aligns it with the broader framework for
election audits described earlier.  Also, we impose a maximum sample size, as
we do for the other methods.} tests the simple hypotheses
\begin{align*}
    H_0\colon & p_T = p_0, \\
    H_1\colon & p_T = p_1,
\end{align*}
using the likelihood ratio:
\[
\left\{
\begin{array}{ll}
    \text{if } S_n = \frac{\Pr(Y_n \parallel p_1)}
                          {\Pr(Y_n \parallel p_0)} > \frac{1}{\alpha},
                       &  \quad \text{terminate and certify (reject $H_0$),} \\
    \text{otherwise,}  &  \quad \text{continue sampling.} \\
\end{array}
\right.
\]
This is equivalent to \eqref{eqn:bayesian_audit} for a prior with point masses
of 0.5 on the values $p_0$ and $p_1$ with $h = 1 / \alpha$.  This procedure has
a risk limit of $\alpha$.

The test statistic can be tailored to sampling with or without replacement by
using the appropriate likelihood.  The SPRT has the smallest expected sample
size among all level $\alpha$ tests of these same hypotheses.  This optimality
holds only when no constraints are imposed on the sampling (such as a maximum
sample size).

The SPRT statistic is a nonnegative martingale when $H_0$ holds; Kolmogorov's
inequality implies that it is automatically risk-limiting.  Other
martingale-based tests are discussed in \autoref{sec:other-methods}.

The statistic from a Bayesian audit can also be a martingale, if the prior is
the true data generating process under $H_0$.  This occurs, for example, for a
risk-maximizing prior if $p_T = 0.5$.\footnote{Such a prior places all its mass
on $p = 0.5$ when $p \leqslant 0.5$.}

\subsubsection{BRAVO.}

In a two-candidate contest, BRAVO~\cite{Lindeman2012BRAVO:Outcomes} applies the
SPRT with:
\begin{align*}
    p_0 &= 0.5, \\
    p_1 &= p_r - \epsilon,
\end{align*}
where $\epsilon$ is a pre-specified small value for which $p_1 >
0.5$.\footnote{The SPRT can perform poorly when $p_T \in (p_0, p_1)$; taking
$\epsilon > 0$ protects against the possibility that the reported winner really
won, but not by as much as reported.} Because it is the SPRT, BRAVO has a risk
limit no larger than $\alpha$.

BRAVO requires picking $p_1$ (analogous to setting a prior for a Bayesian
audit).  The recommended value is based on the reported winner's share, but the
SPRT can be used with any alternative.  Our numerical experiments do not
involve a reported vote share; we simply set $p_1$ to various values.

\subsubsection{MaxBRAVO.}
\label{sec:max_bravo}

As an alternative to specifying $p_1$, we experimented with replacing the
likelihood, $\Pr(Y_n \parallel p_1)$, with the maximized likelihood,
$\max_{p_1} \Pr(Y_n \parallel p_1)$, leaving other aspects of the test
unchanged.  This same idea has been used in other contexts, under the name
MaxSPRT~\cite{Kulldorff2011ASurveillance}.  We refer to our version as
\emph{MaxBRAVO}.  Because of the maximization, the method is not automatically
risk-limiting, so we calibrate the stopping threshold $h$ numerically to attain
the desired risk limit, as we do for Bayesian audits.

\subsection{ClipAudit}

Rivest \cite{Rivest2017ClipAudit:Audit} introduces \emph{ClipAudit}, a method
that uses a statistic that is very easy to calculate, $S_n = (A_n - B_n) /
\sqrt{A_n + B_n}$, where $A_n = Y_n$ and $B_n = n - Y_n$.  Appoximately
risk-limiting thresholds for this statistic were given (found numerically),
along with formulae that give approximate thresholds.  We used ClipAudit with
the `best fit' formula \cite[equation~(6)]{Rivest2017ClipAudit:Audit}.

As far as we can tell, ClipAudit is not related to any of the other methods we
describe here, but $S_n$ is the test statistic commonly used to test the
hypothesis $H_0\colon p_T = 0.5$ against $H_1\colon p_T > 0.5$:
\[
S_n
= \frac{      A_n - B_n }
       {\sqrt{A_n + B_n}}
= \frac{Y_n - n + Y_n}{\sqrt{n}}
= \frac{Y_n / n - 0.5}{\sqrt{0.5 \times (1 - 0.5) / n}}
= \frac{\hat{p}_T - p_0}{\sqrt{p_0 \times (1 - p_0) / n}}.
\]

\subsection{Other methods}
\label{sec:other-methods}

Several martingale-based methods have been developed for the general problem of
testing hypotheses about the mean of a non-negative random variable.  SHANGRLA
exploits this generality to allow auditing of a wide class of
elections~\cite{shangrla}.  While we did not benchmark these methods in our
study (they are better suited for other scenarios, such as comparison audits,
and will be less efficient in the simple case we consider here), we describe
them here in order to point out some connections among the methods.

The essential difference between methods is in the definition of the statistic,
$S_n$.  Given the statistic, the procedure is the same: certify the election if
$S_n > 1 / \alpha$; otherwise, keep sampling.  All of the procedures can be
shown to have risk limit $\alpha$.

All the procedures involve a parameter $\gamma$ that prevents degenerate values
of $S_n$.  This parameter either needs to be set to a specific value or is
integrated out.

The statistics below that are designed for sampling without replacement depend
on the order in which ballots are sampled.  None of the other statistics (in
this section or earlier) have that property.

We use $t$ to denote the value of $\E(X_i)$ under the null hypothesis.  In the
two-candidate context discussed in this paper, $t = p_0 = 0.5$.

We have presented the formulae for the statistics a little differently to
highlight the connections among these methods.  For simplicity of notation, we
define $Y_0 = 0$.

\subsubsection{KMart.}

This method was described online under the name
\emph{KMart}\footnote{\url{https://github.com/pbstark/MartInf/blob/master/kmart.ipynb}}
and is implemented in SHANGRLA \cite{shangrla}.  There are two versions of the
test statistic, designed for sampling with or without
replacement,\footnote{When sampling without replacement, if we ever observe
$Y_n > Nt$ then we ignore the statistic and terminate the audit since $H_1$ is
guaranteed to be true.} respectively:
\[
S_n = \int_0^1 \prod_{i = 1}^n
      \left(\gamma \left[\frac{X_i}{t} - 1\right] + 1\right) d\gamma,
\text{ and }
S_n = \int_0^1 \prod_{i = 1}^n
      \left(\gamma \left[X_i \frac{\left(\frac{N - i + 1}{N}\right)}
                                  {t - \frac{1}{N} Y_{i - 1}} - 1\right] +
            1\right) d\gamma.
\]

This method is related to Bayesian audits for two-candidate contests: for
sampling with replacement and no invalid votes, we have shown that KMart is
equivalent to a Bayesian audit with a risk-maximizing prior that is uniform
over $p > 0.5$.\footnote{The mathematical details are shown in
\appendixref{app:kmart-bayes}.  \label{fn:kmartBayesian}}  The same analysis
shows how to extend KMart to be equivalent to using an arbitrary
risk-maximizing prior, by inserting an appropriately constructed weighting
function $g(\gamma)$ into the integrand.\footref{fn:kmartBayesian}

There is no direct relationship of this sort for the version of KMart that uses
sampling without replacement, since this statistic depends on the order the
ballots are sampled but the statistic for Bayesian audits does not.

\subsubsection{Kaplan--Wald.}

This method is similar to KMart but involves picking a value for $\gamma$
rather than integrating over $\gamma$ \cite{StarkTeague2014}.  The previous
proof\,\footref{fn:kmartBayesian} shows that for sampling with replacement,
Kaplan--Wald is equivalent to BRAVO with $p_1 = (\gamma + 1) / 2$; while for
sampling without replacement, there is no such relationship.

\subsubsection{Kaplan--Markov.}

This method applies Markov's inequality to the martingale $\prod_{i \leqslant
n} X_i / \E(X_i)$, where the expectation is calculated assuming sampling with
replacement \cite{stark2009}.  This gives the statistic
$
S_n = \prod_{i = 1}^n
\left(X_i + \gamma\right) /
\left(  t + \gamma\right)
$.

\subsubsection{Kaplan--Kolmogorov.}

This method is the same as Kaplan--Markov but with the expectation calculated
assuming sampling without replacement \cite{shangrla}.  This gives the
statistic
$
S_n = \prod_{i = 1}^n
\left[\left(X_i + \gamma\right)\left(\frac{N - i + 1}{N}\right)\right] /
\left[   t - \frac{1}{N} Y_{i - 1} + \frac{N - i + 1}{N} \gamma\right]
$.\footnote{As for KMart, if $Y_n > Nt$, the audit terminates: the null
hypothesis is false.}


\section{Evaluating auditing methods}
\label{sec:evaluating_methods}

We evaluated the methods using simulations; see the first part of
\autoref{tab:summary_methods}.

For each method, the termination threshold $h$ was calibrated numerically to
yield maximum risk as close as possible to 5\%.  This makes comparisons among
the methods `fair'.  We calibrated even the automatically risk-limiting
methods, resulting in a slight performance boost.  We also ran some experiments
without calibration, to quantify this difference.

We use three quantities to measure performance: maximum risk and `power',
defined in \autoref{sec:hypothesis-tests}, and the mean sample size.

\subsubsection{Choice of auditing methods.}

Most of the methods require choosing the form of statistics, tuning parameters,
or a prior.  Except where stated, our benchmarking experiments used sampling
without replacement.  Except where indicated, we used the version of each
statistic designed for the method of sampling used.  For example, we used a
hypergeometric likelihood when sampling without replacement.  For Bayesian
audits we used a beta-binomial prior (conjugate to the hypergeometric
likelihood) with shape parameters $a$ and $b$.  For BRAVO, we tried several
values of $p_1$.

The tests labelled `BRAVO' are tests of a method related to but not identical
to BRAVO, because there is no notion of a `reported' vote share in our
experiments.  Instead, we set $p_1$ to several fixed values to explore how the
underlying test statistic (from the SPRT) performs in different scenarios.

For MaxBRAVO and Bayesian audits with risk-maximizing prior, due to time
constraints we only implemented statistics for the binomial likelihood (which
assumes sampling with replacement).  While these are not exact for sampling
without replacement, we believe this choice has only a minor impact when $m \ll
N$ (based on our results for the other methods when using different
likelihoods).

For Bayesian audits with a risk-maximizing prior, we used a beta distribution
prior (conjugate to the binomial likelihood) with shape parameters $a$ and $b$.

ClipAudit only has one version of its statistic.  It is not optimized for
sampling without replacement (for example, if you sample \textbf{all} of the
ballots, it will not `know' this fact), but the stopping thresholds are
calibrated for sampling without replacement.

\subsubsection{Election sizes and sampling designs.}

We explored combinations of election sizes $N \in \{500, 1000, 5000, 10000,
20000, 30000\}$ and maximum sample sizes $m \in \{500, 1000, 2000, 3000\}$.
Most of our experiments used a sampling increment of 1 (i.e.\ check the
stopping rule after each ballot is drawn).  We also varied the sampling
increment (values in $\{2, 5, 10, 20, 50, 100, 250, 500, 1000, 2000\}$) and
tried sampling with replacement.

\subsubsection{Benchmarking via dynamic programming.}

We implemented an efficient method for calculating the performance measures
using dynamic programming.\footnote{Our code is available at:
\url{https://github.com/Dovermore/AuditAnalysis}}  This exploits the Markovian
nature of the sampling procedure and the low dimensionality of the (univariate)
statistics.  This approach allowed us to calculate---for elections with up to
tens of thousands of votes---exact values of each of the performance measures,
including the tail probabilities of the sampling distributions, which require
large sample sizes to estimate accurately by Monte Carlo.  We expect that with
some further optimisations our approach would be computationally feasible for
larger elections (up to 1 million votes).  The complexity largely depends on
the maximum sample size, $m$.  As long as this is moderate (thousands) our
approach is feasible.  For more complex audits (beyond two-candidate contests),
a Monte Carlo approach is likely more practical.


\section{Results}
\label{sec:results}

\subsection{Benchmarking results}

\subsubsection{Sample size distributions.}
\label{sec:boundary_pdf_cdf}

Different methods have different distributions of sample sizes;
\autoref{fig:pmf-distribution} shows these for a few methods when $p_T = 0.5$.
Some methods tend to stop early; others take many more samples.  Requiring a
minimum sample size might improve performance of some of the methods; see
\autoref{sec:exploring-improvements}.

\begin{figure}[ht]
\centering
\includegraphics[width=0.94\textwidth]{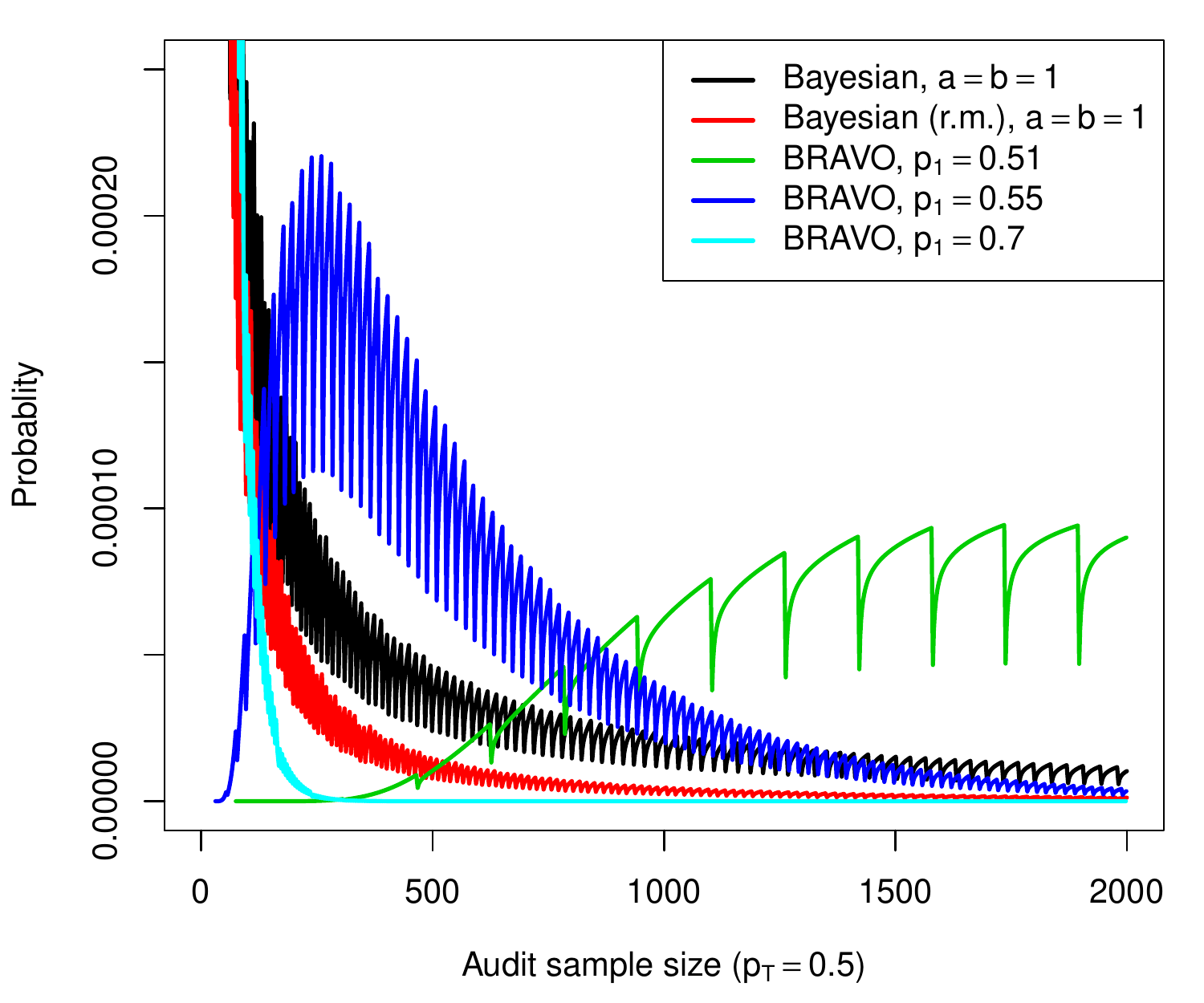}
\caption{\textbf{Sample size distributions.}  Audits of elections with $N =
20,000$ ballots, maximum sample size $m = 2,000$, and true vote share a tie
($p_T = 0.5$).  Each method is calibrated to have maximum risk 5\%.  The
depicted probabilities all sum to 0.05; the remaining 0.95 probability in each
case is on the event that the audit reaches the full sample size ($n = m$) and
progresses to a full manual tabulation.  `Bayesian (r.m.)' refers to the
Bayesian audit with a risk-maximizing prior.  The sawtooth pattern is due to
the discreteness of the statistics.}
\label{fig:pmf-distribution}
\end{figure}

\subsubsection{Mean sample sizes.}

We focus on average sample sizes as a measure of audit efficiency.
\autoref{tab:main-comparisons} shows the results of experiments with $N =
20,000$ and $m = 2,000$.  We discuss other experiments and performance measures
below.

\begin{table}
\centering

\caption{\textbf{Results from benchmarking experiments.}  Audits of elections
with $N = 20,000$ ballots and a maximum sample size $m = 2,000$.  The numeric
column headings refer to the value of $p_T$; the corresponding margin of
victory (\textsc{mov}) is also reported.  Each row refers to a specific
auditing method.  For calibrated methods, we report the threshold obtained.
For easier comparison, we present these on the nominal risk scale for BRAVO,
MaxBRAVO and ClipAudit (e.g.\ $\alpha = 1 / h$ for BRAVO), and on the upset
probability scale for the Bayesian methods ($\upsilon = 1 / (h + 1)$).  For the
experiments without calibration, we report the maximum risk of each method when
set to a `nominal' risk limit of 5\%.  We only report uncalibrated results for
methods that are automatically risk-limiting, as well as ClipAudit using its
`best fit' formula to set the threshold.  `Bayesian (r.m.)' refers to the
Bayesian audit with a risk-maximizing prior.  The numbers in bold are those
that are (nearly) best for the given experiment and choice of $p_T$.  The
section labelled `$n \geqslant 300$' refers to experiments that required the
audit to draw at least 300 ballots.}
\label{tab:main-comparisons}
\smallskip
\begin{tabular}{l@{\hskip 6pt}r@{\hskip 12pt}rrrlrrrrr}
\toprule
&  &  \multicolumn{3}{c}{\textbf{Power (\%)}} & &
      \multicolumn{5}{c}{\textbf{Mean sample size}}  \\
      \cmidrule{3-5}
      \cmidrule{7-11}
   &   $p_T$ (\%) $\rightarrow$ &
52 & 55 & 60 & \phantom{a} &
52 & 55 & 60 & 64 & 70  \\
\textbf{Method} &
\textsc{mov} (\%) $\rightarrow$ &
 4 & 10 & 20 &             &
 4 & 10 & 12 & 28 & 40 \\
\midrule
\textbf{Calibrated} & $\alpha$ or $\upsilon$ (\%)  \\
Bayesian, $a = b =   1$ & 0.2 & 35 & \textbf{99} & \textbf{100} &  & 1623 &
                                    637 & \textbf{172} &  \textbf{90} &  46 \\
Bayesian, $a = b = 100$ & 1.2 & 48 & \textbf{100} & \textbf{100} &  &
                                      \textbf{1551} & 616 & 232 & 150 &  97 \\
Bayesian, $a = b = 500$ & 3.6 & \textbf{53} & \textbf{100} & \textbf{100} &
                                             & 1582 & 709 & 318 & 219 & 149 \\
Bayesian (r.m.), $a = b = 1$ & 6.1 & 19 & 94 & \textbf{100} &  &
                                      1742 & 813 & 185 &  \textbf{89} &  41 \\
BRAVO, $p_1 = 0.7$ & 5.8 & 9 & 21 & 84 &  & 1828 & 1592 & 530 &  95 &
                                                                \textbf{37} \\
BRAVO, $p_1 = 0.55$ & 5.3 & 37 & \textbf{99} & \textbf{100} & & \textbf{1549}
                                 & \textbf{562} & 196 & 129 &  85 \\
BRAVO, $p_1 = 0.51$ & 22.7 & \textbf{55} & \textbf{100} & \textbf{100} &
                                             & 1617 & 791 & 384 & 272 & 190 \\
MaxBRAVO   & 1.6 & 30 & \textbf{98} & \textbf{100} & & 1660 & 680 & 177 &
                                                          \textbf{91} &  45 \\
ClipAudit  & 4.7 & 33 & \textbf{98} & \textbf{100} & & 1630 & 639 &
                                           \textbf{169} & \textbf{89} &  45 \\
\addlinespace
\textbf{Calibrated, $n \geqslant 300$}  & $\alpha$ or $\upsilon$ (\%)  \\
Bayesian, $a = b = 1$ & 0.6 & 45 & \textbf{99} & \textbf{100} &  & 1547 & 601
& \textbf{311} & \textbf{300} & \textbf{300} \\
Bayesian (r.m.), $a = b = 1$ & 34.4 & 39 & \textbf{99} & \textbf{100} &  &
1554 & 587 & \textbf{307} & \textbf{300} & \textbf{300} \\
BRAVO, $p_1 = 0.7$ & 100.0 & 0 & 6 & 83 &  & 1994 & 1900 & 708 & 309 &
\textbf{300} \\
BRAVO, $p_1 = 0.55$ & 6.0 & 38 & \textbf{99} & \textbf{100} &  &
\textbf{1545} & \textbf{583} & \textbf{309} & \textbf{300} & \textbf{300} \\
BRAVO, $p_1 = 0.51$ & 22.7 & \textbf{55} & \textbf{100} & \textbf{100} &  &
1617 & 791 & 392 & 313 & \textbf{300} \\
MaxBRAVO   & 5.0 & 44 & \textbf{99} & \textbf{100} &  & 1546 & 595 &
\textbf{310} & \textbf{300} & \textbf{300} \\
ClipAudit  & 11.4 & 44 & \textbf{99} & \textbf{100} &  & \textbf{1545} & 595 &
\textbf{310} & \textbf{300} & \textbf{300} \\
\addlinespace
\textbf{Uncalibrated}   & Risk (\%)  \\
Bayesian (r.m.), $a = b = 1$ & 3.7 & 17 & 93 & \textbf{100} &  & 1785 & 864 &
198 &  95 &  44 \\
BRAVO, $p_1 = 0.7$  & 4.3 & 8 & 20 & 83 &  & 1846 & 1621 & 552 &  99 &
\textbf{38} \\
BRAVO, $p_1 = 0.55$ & 4.7 & \textbf{37} & \textbf{98} & \textbf{100} &  &
\textbf{1561} & \textbf{572} & 200 & 131 &  86 \\
BRAVO, $p_1 = 0.51$ & 0.029 & 6 & 89 & \textbf{100} &  & 1985 & 1505 & 760 &
542 & 377 \\
ClipAudit   & 5.1 & \textbf{34} & \textbf{98} & \textbf{100} &  & 1618 & 628 &
\textbf{167} &  \textbf{88} &  45 \\
\bottomrule
\end{tabular}

\end{table}

No method was uniformly best.  Given the equivalence of BRAVO and Bayesian
audits, the comparisons amount to examining dependence on the prior.

In general, methods that place more weight on close elections, such as BRAVO
with $p_1 = 0.55$ or a Bayesian audit with a moderately constrained prior ($a =
b = 100$) were optimal when $p_T$ was closer to 0.5.  Methods with substantial
prior weight on wider margins, such as BRAVO with $p_1 = 0.7$ and Bayesian
audits with the risk-maximizing prior, perform poorly for close elections.

Consistent with theory, BRAVO was optimal when the assumptions matched the
truth ($p_1 = p_T$).  However, our experiments violate the theoretical
assumptions because we imposed a maximum sample size, $m$.  (Indeed, when $p_1
= p_T = 0.51$, BRAVO is no longer optimal in our experiments.)

Two methods were consistently poor: BRAVO with $p_1 = 0.51$ and a Bayesian
audit with $a = b = 500$.  Both place substantial weight on a very close
election.

MaxBRAVO and ClipAudit, the two methods without a direct match to Bayesian
audits, performed similarly to a Bayesian audit with a uniform prior ($a = b =
1$).  All three are `broadly' tuned: they perform reasonably well in most
scenarios, even when they are not the best.

\subsubsection{Effect of calibration on the uncalibrated methods.}

For most of the automatically calibrated methods, calibration had only a small
effect on performance.  BRAVO with $p_1 = 0.51$ is an exception: it was very
conservative because it normally requires more than $m$ samples.

\subsubsection{Other election sizes and performance measures.}

The broad conclusions are the same for a range of values of $m$ and $N$, and
when performance is measured by quantiles of sample size or probability of
stopping without a full hand count rather than by average sample size.

\subsubsection{Sampling with vs without replacement.}

There are two ways to change our experiments to explore sampling with
replacement: (i)~construct versions of the statistics specifically  for
sampling with replacement; (ii)~leave the methods alone but sample with
replacement.  We explored both options, separately and combined; differences
were minor when $m \ll N$.

\subsection{Choosing between methods}

Consider the following two methods, which were the most efficient for different
election margins: (i)~BRAVO with $p_1 = 0.55$; (ii)~ClipAudit.  For $p_T =
0.52$, the mean sample sizes are 1,549 vs 1,630 (BRAVO saved 81 draws on
average).  For $p_T = 0.7$, the equivalent numbers are 85 vs 45 (ClipAudit
saved 40 draws on average).

Picking a method requires trade-offs involving resources, workload
predictability, and jurisdictional idiosyncrasies in ballot handling and
storage---as well as the unknown true margin.  Differences in expected sample
size across ballot-polling methods might be immaterial in practice compared to
other desiderata.

\subsection{Exploring changes to the methods}
\label{sec:exploring-improvements}

\subsubsection{Increasing the sampling increment (`round size').}

Increasing the number of ballots sampled in each `round' increases the chance
that the audit will stop without a full hand count but increases mean sample
size.  This is as expected; the limiting version is a single fixed sample of
size $n = m$, which has the highest power but loses the efficiency that early
stopping can provide. 

Increasing the sampling increment had the most impact on methods that tend to
stop early, such as Bayesian audits with $a = b = 1$, and less on methods that
do not, such as BRAVO with $p_1 = 0.51$.  Increasing the increment also
decreases the differences among the methods.  This makes sense because when the
sample size is $m$, the methods are identical (since all are calibrated to
attain the risk limit).

Considering the trade-off discussed in the previous section, since increasing
the sampling increment improves power but increases mean sample size, it
reduces effort when the election is close, but increases it when the margin is
wide.

\subsubsection{Increasing the maximum sample size ($m$).}

Increasing $m$ has the same effect as increasing the sampling increment: higher
power at the expense of more work on average.  This effect is stronger for
closer elections, since sampling will likely stop earlier when the margin is
wide.

\subsubsection{Requiring/encouraging more samples.}

The Bayesian audit with $a = b = 1$ tends to stop too early, so we tried two
potential improvements, shown in \autoref{tab:main-comparisons}.

The first was to impose a minimum sample size, in this case $n \geqslant 300$.
This is very costly if the margin is wide, since we would not normally require
this many samples.  However, it boosts the power of this method and reduces its
expected sample size for close contests.

A gentler way to achieve the same aim is to make the prior more informative, by
increasing $a$ and $b$.  When $a = b = 100$, we obtain largely the same benefit
for close elections with a much milder penalty when the margin is wide.  The
overall performance profile becomes closer to BRAVO with $p_1 = 0.55$.


\section{Discussion}
\label{sec:discussion}

We compared several ballot-polling methods both analytically and numerically,
to elucidate the relationships among the methods.  We focused on two-candidate
contests, which are building blocks for auditing more complex elections.  We
explored modifications and extensions to existing procedures.  Our benchmarking
experiments calibrated the methods to attain the same maximum risk.

Many `non-Bayesian' auditing methods are special cases of a Bayesian procedure
for a suitable prior, and Bayesian methods can be calibrated to be
risk-limiting (at least, in the two-candidate, all-valid-vote context
investigated here).  Differences among such methods amount to technical
details, such as choices of tuning parameters, rather than something more
fundamental.  Of course, upset probability \emph{is} fundamentally different
from risk.

No method is uniformly best, and most can be `tuned' to improve performance for
elections with either closer or wider margins---but not both simultaneously.
If the tuning is not extreme, performance will be reasonably good for a wide
range of true margins.  In summary:
\begin{enumerate}
\item If the true margin is known approximately, BRAVO is best.

\item Absent reliable information on the margin, ClipAudit and Bayesian audits
    with a uniform prior (calibrated to attain the risk limit) are efficient.

\item Extreme settings, such as $p_1 \approx 0.5$ or an overly informative
    prior may result in poor performance even when the margin is small.  More
    moderate settings give reasonable or superior performance if the maximum
    sample size is small compared to the number of ballots cast.
\end{enumerate}
Choosing a method often involves a trade-off in performance between narrow and
wide margins.

There is more to auditing than the choice of statistical inference method.
Differences in performance across many `reasonable' methods are small compared
to other factors, such as how ballots are organized and stored.

\textbf{Future work:} While we tried to be comprehensive in examining
ballot-polling methods for two-candidate contests with no invalid votes, there
are many ways to extend the analysis to cover more realistic scenarios.  Some
ideas include: (i)~more than two candidates and non-plurality social choice
functions; (ii)~invalid votes; (iii)~larger elections; (iv)~stratified samples;
(v)~batch-level audits; (vi)~multi-page ballots.

\ignore{
The recently developed SHANGRLA suite provides a unifying framework for
auditing a wide variety of elections.  It does this by, for each social choice
function, defining a statistic that takes values on a common scale, which can
then be tested using a generic statistical test (one of those in the lower part
of \autoref{tab:summary_methods}).

It might also be possible to find more connections between the different
methods, such as we found between KMart and Bayesian audits, in order to
motivate the development of more flexible and powerful generic methods.

There are straightforward ways to handle invalid votes even without any changes
to the methods.  For sampling with replacement, invalid votes can be ignored;
the only effect is to `thin' the sample \cite{Lindeman2012BRAVO:Outcomes}.  For
sampling without replacement, the number of invalid votes is nonignorable.
However, it should have little impact if the sampling fraction ($n / N$) is
small.

Another line of work relates to optimizing other choices beyond the choice of
statistic, such as minimum and maximum sample sizes, sampling increments, and
so forth.  This requires careful consideration of the factors we can vary
and/or wish to optimize.  Is it worth changing the voting system to make
comparison audits possible?  Is it worth sorting ballots by ballot style?  How
much space is available for audit teams to work?  Does the software allow more
than one team to enter data?  Because different jurisdictions have profoundly
different logistical procedures for handling, organizing, and storing ballots
and different equipment for tabulating votes, these are not simple questions.
}


\bibliographystyle{splncs04}
\bibliography{evoteid2020-twocandidate}


\appendix

\section{Risk-limiting Bayesian audits with arbitrary priors}
\label{app:bayes-rla}

All of the results in this appendix are for \emph{ballot-polling audits of
two-candidate contests with no invalid votes}, the same as for the main text of
this paper.  For brevity, we often omit stating this assumption in the
mathematical statements below.

For such contests, Vora \cite{Vora2019Risk-LimitingElections} provided a
construction of a risk-limiting Bayesian audit, by taking a Bayesian audit with
an arbitrary prior and constructing a new prior from it that has the property
that a threshold on the upset probability is also a risk limit.

We extend that result to show that \emph{any} prior has a bounded maximum risk
and can therefore be used to conduct a risk-limiting audit.  Such a usage would
involve calculating a threshold on the upset probability that gives a
particular specified bound on the risk.  Our derivation also applies to a more
general class of Bayesian audits than considered by Vora
\cite{Vora2019Risk-LimitingElections}.

First, we need to define some more general procedures and further notation.

\subsection{General SPRT and Bayesian audits}

The SPRT-based and Bayesian audits presented in the main text only terminate
sampling in order to certify the election.  More generally, we can specify a
threshold of evidence for terminating the sampling and proceeding immediately
to a full manual tabulation (similar to reaching the sample size limit, $m$,
but the threshold could be reached much earlier).

The SPRT-based audit, as per Wald's original definition, is as follows:
\[
\begin{cases}
\text{if } S_n > \frac{1 - \beta}{\alpha},
    & \text{terminate and certify,} \\
\text{if } S_n < \frac{\beta}{1 - \alpha},
    & \text{proceed to full tabulation,} \\
\text{otherwise,}
    & \text{continue sampling.} \\
\end{cases}
\]

The statistic $S_n$ is the likelihood ratio, the same as our earlier definition
from \autoref{sec:sprt}.  The only new component here is $\beta$.  When $\beta
= 0$, we recover the earlier definition, noting that the second inequality here
will never be satisfied because $S_n$ cannot be negative.  Like the earlier
definition, this version of the SPRT has a risk limit of $\alpha$.\footnote{As
a hypothesis test (of two simple hypotheses; see \autoref{sec:sprt}), it also
has the property that the power is at least $1 - \beta$.  In statistical
parlance, $\alpha$ and $\beta$ are limits on the type I and type II error
respectively.}

Analogously, we define a more general Bayesian audit as follows:
\[
\begin{cases}
\text{if } S_n > \frac{1 - \upsilon}{\upsilon},
    & \text{terminate and certify,} \\
\text{if } S_n < \frac{1 - \phi}{\phi},
    & \text{proceed to full tabulation,} \\
\text{otherwise,}
    & \text{continue sampling.} \\
\end{cases}
\]
The statistic $S_n$ is the Bayes factor, the same as our earlier definition
from \autoref{sec:bayesian}.  If the prior gives equal weight to the two
hypotheses, $\Pr(H_0) = \Pr(H_1)$, then the inequalities above are equivalent
to more straightforward ones in terms of the upset probability, as follows:
\[
\begin{cases}
\text{if } \Pr(H_0 \mid Y_n) < \upsilon,
    & \text{terminate and certify,} \\
\text{if } \Pr(H_0 \mid Y_n) > \phi,
    & \text{proceed to full tabulation,} \\
\text{otherwise,}
    & \text{continue sampling.} \\
\end{cases}
\]
Setting $\phi = 1$ recovers our earlier definition of the Bayesian audit.

\subsection{Correspondence between the SPRT and Bayesian audits}

In \autoref{sec:sprt}, we noted that the SPRT is equivalent to a Bayesian audit
with a particular choice of prior (equal point masses on $p_0$ and $p_1$) and a
specific choice of threshold.  We can extend this correspondence to the more
general definitions above.  We keep the choice of prior the same (equal point
masses)---this makes $S_n$ equivalent under both audit methods---and equate the
two sets of thresholds,
\[
    \frac{1 - \beta}{\alpha} = \frac{1 - \upsilon}{\upsilon}
    \quad \text{and} \quad
    \frac{\beta}{1 - \alpha} = \frac{1 - \phi}{\phi}.
\]
We can solve these for either $\alpha$ and $\beta$, or $\upsilon$ and $\phi$,
which respectively gives:
\begin{align*}
    \alpha   &= \frac{\upsilon (2\phi - 1)}{\phi - \upsilon}      &
    \upsilon &= \frac{\alpha}{1 - \beta + \alpha}                 \\
    \beta    &= \frac{(1 - \phi)(1 - 2\upsilon)}{\phi - \upsilon} &
    \phi     &= \frac{1 - \alpha}{1 - \beta + \alpha}.
\end{align*}
In other words, the SPRT is equivalent to a Bayesian audit with a prior having
equal point masses on $p_0$ and $p_1$, and the thresholds $\upsilon$ and $\phi$
set to the values given above (in terms of the desired $\alpha$ and $\beta$).

As a corollary, a Bayesian audit with such a prior will have a risk limit given
by $\upsilon (2\phi - 1) / (\phi - \upsilon)$.

\subsubsection{Special cases.}

The results in the main text were for the special case where $\beta = 0$ and
$\phi = 1$.  Under this case, the above correspondence with the SPRT simplifies
to $\alpha = \upsilon / (1 - \upsilon)$ and $\upsilon = \alpha / (1 + \alpha)$.

Vora \cite{Vora2019Risk-LimitingElections} considered a more restricted version
of the Bayesian audit where $\phi = 1 - \upsilon$ (symmetric thresholds on the
upset probability).  This is a special case of the general definition given
here.  Under this special case, the above correspondence with the SPRT
simplifies to $\alpha = \beta = \upsilon$ (the risk limit is the same the the
threshold on the upset probability).

The relative sizes of the risk limit ($\alpha$) and the upset probability
threshold ($\upsilon$) are sometimes of interest.  It is instructive to look at
the correspondence with the SPRT and consider different cases.  In the special
case in the main text, we have $\alpha = \upsilon / (1 - \upsilon) > \upsilon$;
the risk limit is larger than the threshold.  For Vora's special case, the two
values coincide.  We can also construct audits where the risk limit is
\emph{smaller} than the threshold by setting $\phi < 1 - \upsilon$; such audits
are more likely to terminate early and proceed to a full tabulation, hence
reducing the risk.

\subsubsection{The risk-maximizing prior.}

Vora \cite{Vora2019Risk-LimitingElections} defined a `risk-maximising prior' in
order to construct a Bayesian audit for which the threshold on the upset
probability was also a risk limit.  From the more general analysis above, we
see that this construction relies on symmetric thresholds ($\phi = 1 -
\upsilon$) and will not work in general.  For example, in the special case
considered in the main text here, to obtain a risk limit of $\alpha$ we need to
set a stricter threshold on the upset probability: $\upsilon = \alpha / (1 +
\alpha) < \alpha$.  In terms of the Bayes factor, this translates to: $S_n > 1
/ \alpha$.

\subsection{Further definitions and notation}

An audit will result in a sequence of $n$ sampled ballots.  The sequence ends
either once the audit termination condition is met resulting in the
certification of the election, or otherwise once it has progressed to a full
manual tabulation of the votes (e.g.\ upon reaching $n = m$ without
certification).  Let the $\Lambda$ be the set of all sequences that lead to
certification, and $\bar\Lambda$ the set of all those that do not.

When the reported winner is \emph{not} the true winner, the sequences in
$\Lambda$ will result in miscertification while those in $\bar\Lambda$ will
lead to discovering the true winner.  The risk of the audit will be the
probability of obtaining a sequence from $\Lambda$ (rather than $\bar\Lambda$).
In other words,
\begin{equation}
\label{eqn:certprob}
  \Pr(\text{certify} \parallel T, N)
= \sum_{s \in \Lambda} \Pr(s \parallel T, N),
\end{equation}
where $s = (X_1, X_2, \dots, X_n)$ here represents an arbitrary sequence that
leads to certification, and as before $T$ is the true total number of votes for
the winner and $N$ the total number of cast ballots.  Note that $n$ is not
fixed; different sequences can terminate at different values of $n$.  The
summand is the probability of observing a specific sequence.  If sampling with
replacement, this will be a product of Bernoulli probabilities,
\begin{equation}
\label{eqn:seqprob-with}
  \Pr(s \parallel T, N) = \Pr(s \parallel p_T) = p_T^{Y_n} (1 - p_T)^{n - Y_n}.
\end{equation}
Note that this is the same a binomial probability but without the binomial
coefficient, $\binom{n}{Y_n}$.  It represent the probability of a given
\emph{ordered} sequence of ballots.  If sampling without replacement, the
probability of the sequence will be an ordered version of a hypergeometric
probability,
\begin{equation}
\label{eqn:seqprob-without}
  \Pr(s \parallel T, N)
= \frac{1}{\binom{n}{Y_n}}
  \frac{\binom{T}{Y_n}\binom{N - T}{n - Y_n}}{\binom{N}{n}}
= \frac{T^{(Y_n)} (N - T)^{(n - Y_n)}}{N^{(n)}},
\end{equation}
using the notation $n^{(m)} = n (n - 1) \cdots (n - m + 1) = n! / (n - m)!$.

Before we state the main results of this appendix, we define a property of the
auditing methods that we need for later proofs.
\begin{definition}
\label{dfn:sensible}
An auditing method is called \emph{sample-coherent} if for every sequence that
results in certification ($s \in \Lambda$), the sampled final tally will be in
favour of the reported winner, i.e.\ $Y_n / n > 0.5$.
\end{definition}
This property would be expected of any sensible auditing method, but is not
mathematically guaranteed.  Indeed, it is possible to define stopping rules for
an audit such that this property does not hold.  Here, we only consider
sample-coherent auditing methods.\footnote{We are pretty sure this covers all
of the methods defined in this paper, but have not mathematically verified it
for each one.}

\subsection{Bayesian audits are risk-limiting}

\begin{lemma}
\label{lemma:maxrisk}
The maximum risk of a sample-coherent ballot-polling audit of a plurality
contest with two candidates and no invalid votes is given by the
(mis)certification probability when the true tally gives equal votes for each
candidate ($T = \frac{N}{2}$, $p_T = \frac{1}{2}$), or the closest possible
such non-winning tally in the case where we have odd number of cast ballots ($T
= \frac{N - 1}{2}$).
\end{lemma}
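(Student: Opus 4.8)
The plan is to treat the miscertification probability $\Pr(\text{certify} \parallel T, N) = \sum_{s \in \Lambda} \Pr(s \parallel T, N)$ from~\eqref{eqn:certprob} as a function of the true winner tally $T$ (equivalently $p_T$) and to show it is monotonically nondecreasing in $T$ across the whole non-winning range, so its maximum is attained at the largest admissible value: $T = \tfrac{N}{2}$ when $N$ is even and $T = \tfrac{N-1}{2}$ when $N$ is odd. The crucial structural point is that the certifying set $\Lambda$ is fixed by the stopping rule applied to the observed data and does \emph{not} depend on $T$; consequently it suffices to prove that each individual summand $\Pr(s \parallel T, N)$ is nondecreasing in $T$ on this range, since a sum of nondecreasing functions is nondecreasing. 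Sample-coherence (Definition~\ref{dfn:sensible}) supplies exactly the hypothesis that makes each summand behave this way: for every $s \in \Lambda$ it forces $Y_n > n/2$, equivalently $Y_n > n - Y_n$.

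For sampling with replacement I would work with $p_T \in (0, \tfrac12]$ and differentiate the ordered-sequence probability~\eqref{eqn:seqprob-with}:
\[
\frac{\partial}{\partial p_T}\, p_T^{Y_n}(1-p_T)^{n-Y_n}
= p_T^{Y_n-1}(1-p_T)^{n-Y_n-1}\bigl(Y_n - n\,p_T\bigr).
\]
The sign is governed entirely by $Y_n - n\,p_T$, and sample-coherence gives $Y_n > n/2 \geq n\,p_T$ for $p_T \leq \tfrac12$, so the derivative is positive. Hence every summand is increasing on $(0, \tfrac12]$, the risk is increasing, and its maximum over the null is at $p_T = \tfrac12$ (the even-$N$ tie).

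For sampling without replacement the same idea is carried out with a discrete difference. Writing $a = Y_n$ and $b = n - Y_n$ and using the falling-factorial form~\eqref{eqn:seqprob-without}, I would compare consecutive tallies; whenever both probabilities are nonzero their ratio simplifies to
\[
\frac{\Pr(s \parallel T+1, N)}{\Pr(s \parallel T, N)}
= \frac{(T+1)\,(N-T-b)}{(T-a+1)\,(N-T)},
\]
so that $\Pr(s \parallel T+1, N) \geq \Pr(s \parallel T, N)$ reduces to the single inequality $a\,(N-T) \geq b\,(T+1)$. On the step range (with $T+1$ no larger than the boundary tally) one has $N - T \geq T+1$, while coherence gives $a \geq b \geq 0$, whence $a(N-T) \geq b(N-T) \geq b(T+1)$. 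The degenerate cases where the smaller tally gives probability zero (when $T < a$ or $N-T < b$) are immediate, since the larger tally then gives a nonnegative probability. Summing over $\Lambda$ shows the risk is nondecreasing on $\{0, 1, \dots\}$ up to the boundary and hence maximized at $T = \tfrac{N}{2}$ or $T = \tfrac{N-1}{2}$ according to the parity of $N$.

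The main obstacle is the without-replacement step: verifying the ratio identity and, more importantly, cleanly handling the vanishing terms and confirming that $N - T \geq T+1$ holds throughout the relevant range for both parities of $N$. The conceptual content is identical in both sampling schemes, and sample-coherence is precisely the condition preventing a summand from decreasing; without it a sequence with $Y_n \leq n/2$ could make the risk \emph{fall} as $T$ approaches the tie, so the hypothesis cannot simply be dropped.
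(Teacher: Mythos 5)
Your proposal is correct and follows essentially the same route as the paper's proof: show each summand $\Pr(s \parallel T, N)$ in \eqref{eqn:certprob} is monotone in $T$ under $H_0$, using sample-coherence ($Y_n > n/2$) to control the sign, for both sampling schemes. The only difference is cosmetic: for sampling without replacement you reduce the ratio condition to $Y_n(N-T) \geqslant (n-Y_n)(T+1)$ and conclude from $Y_n > n - Y_n$ and $N - T \geqslant T+1$, whereas the paper manipulates the corresponding odds (and needs the mild side condition $N - T > 1$); your version is, if anything, slightly cleaner.
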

\begin{proof}
We prove this lemma by showing the summand in \eqref{eqn:certprob} is
monotonically increasing in $T$ when $H_0$ is true.  This is the same technique
used to prove Theorem 2 in Vora~\cite{Vora2019Risk-LimitingElections}, for the
case of $N$ odd and sampling without replacement.  Here we extend the argument
to also cover $N$ even and sampling with replacement.

For sampling with replacement, the summand is given by
\eqref{eqn:seqprob-with}.  As a function of $p_T$ over the unit interval, this
is strictly increasing until it reaches a maximum at $p_T = Y_n / n$ and
strictly decreasing thereafter.  This is easily shown via calculus (note also
that \eqref{eqn:seqprob-with} is a binomial likelihood with $Y_n / n$ as the
corresponding maximum likelihood estimate).  By \autoref{dfn:sensible}, this
maximum occurs at a value $p_T > 0.5$.  Therefore, when $H_0$ is true ($p_T
\leqslant 0.5$), the summand is monotonically increasing in $p_T$, and
therefore also in $T$.

For sampling without replacement, the summand is given by
\eqref{eqn:seqprob-without}.  Consider this as a function of $T$.  We will
increment this by 1 (replacing $T$ with $T + 1$), take the ratio with the
original version, and show this ratio is positive.  Let the ratio be $G$.  We
have
\[
G = \frac{\Pr(s \parallel T + 1, N)}
         {\Pr(s \parallel T    , N)}
  = \frac{T + 1}{T - Y_n + 1} \times
    \frac{(N - T) - (n - Y_n)}{N - T}.
\]
We assume that $H_0$ is still true when $T$ is incremented, that is we have $(T
+ 1) / N \leqslant 0.5$.  Also, by \autoref{dfn:sensible} we have $0.5 < Y_n /
n$.  Therefore,
\[
    \frac{T + 1}{N} < \frac{Y_n}{n}.
\]
This inequality still holds if we convert both of these probabilities into
odds,
\begin{align*}
    \frac{(    T + 1) / N}
         {(N - T - 1) / N}
 &< \frac{     Y_n  / n}
         {(n - Y_n) / n}  \\
    \frac{    T + 1}
         {N - T - 1}
 &< \frac{    Y_n}
         {n - Y_n}.
\end{align*}
We also trivially have
\[
    \frac{T + 1}{N - T} < \frac{T + 1}{N - T - 1}
\]
as long as $N - T > 1$, i.e.\ as long as one vote has been cast in favour of
the losing candidate.  This will be true for all practical scenarios of
interest since $N$ will be large (certainly, much larger than 2), and under
$H_0$ we must have $T$ be at most $N / 2$.

Combining the previous two inequalities gives,
\begin{align*}
    \frac{T + 1}{N - T}
 &< \frac{Y_n}{n - Y_n}  \\
    \frac{n - Y_n}{N - T}
 &< \frac{Y_n}{T + 1}  \\
    1 - \frac{Y_n}{T + 1}
 &< 1 - \frac{n - Y_n}{N - T}  \\
    \frac{T + 1 - Y_n}{T + 1}
 &< \frac{(N - T) - (n - Y_n)}{N - T}  \\
    1
 &< \frac{T + 1}{T + 1 - Y_n} \times
    \frac{(N - T) - (n - Y_n)}{N - T}
  = G.
\end{align*}
In other words, the ratio is positive.  Therefore, the summand given by
\eqref{eqn:seqprob-without} is monotonically increasing in $T$ as long as $H_0$
is true.
\qed
\end{proof}

\begin{lemma}
\label{lemma:bayesmonotone}
The risk of a Bayesian audit is a monotone increasing function of $\upsilon$,
the threshold on the upset probability.  In other words, relaxing the
threshold leads to greater risk.
\end{lemma}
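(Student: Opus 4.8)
The plan is to reduce this to a statement about nested events under a fixed probability measure. First I would observe that, for a fixed prior $f$ and a fixed sampling scheme, the Bayes-factor statistic $S_n$ does not depend on $\upsilon$: it is determined by the data $Y_n$ (together with $n$, $m$ and $N$) alone. The only $\upsilon$-dependent ingredient of the procedure \eqref{eqn:bayesian_audit} is the certification threshold $h(\upsilon) = \frac{1 - \upsilon}{\upsilon}\frac{\Pr(H_0)}{\Pr(H_1)}$, which is strictly decreasing in $\upsilon$. Thus increasing $\upsilon$ only lowers the bar for certification without altering the evidence itself.

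Next I would rewrite the certification event so as to avoid reasoning directly about the set $\Lambda$, whose members have lengths that change as $\upsilon$ varies. For the main-text Bayesian audit (where $\phi = 1$, so the only alternative to certifying is reaching the cap $m$), the audit certifies precisely when the statistic exceeds the threshold at some step before the cap. Working on the sample space of the first $m$ draws, this is exactly the event $\{\max_{n \leqslant m} S_n > h(\upsilon)\}$, and this description is valid for sampling both with and without replacement, since $S_1, \dots, S_m$ are all functions of the same $m$-draw outcome and the governing probability (binomial or hypergeometric) does not involve $\upsilon$. Monotonicity is then immediate pointwise in the null tally: for $\upsilon_1 < \upsilon_2$ we have $h(\upsilon_1) > h(\upsilon_2)$, so $\{\max_{n \leqslant m} S_n > h(\upsilon_1)\} \subseteq \{\max_{n \leqslant m} S_n > h(\upsilon_2)\}$, and taking probabilities under a fixed $(T, N)$ shows that $\Pr(\text{certify} \parallel T, N)$ in \eqref{eqn:certprob} is nondecreasing in $\upsilon$. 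To finish, I would invoke \autoref{lemma:maxrisk}: the risk equals the certification probability at the fixed worst-case tally ($T = \frac{N}{2}$, or $T = \frac{N-1}{2}$ when $N$ is odd), a location that does not depend on $\upsilon$; hence the risk is that single pointwise-monotone certification probability and is therefore itself monotone increasing in $\upsilon$.

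The main obstacle I anticipate is the bookkeeping around the sequential stopping rule: relaxing $\upsilon$ can cause a path to certify at an earlier step, so the certifying sequences collected in $\Lambda$ are not literally nested as sets, and a direct term-by-term comparison of the sum in \eqref{eqn:certprob} is awkward. Recasting certification as the single event $\{\max_{n \leqslant m} S_n > h(\upsilon)\}$ on the full $m$-draw sample space is the step that removes this difficulty, after which the argument collapses to monotonicity of nested events under a $\upsilon$-independent measure.
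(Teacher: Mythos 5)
Your proposal is correct and rests on the same core observation as the paper's proof: increasing $\upsilon$ lowers the certification threshold $h(\upsilon)$ while leaving the statistic $S_n$ unchanged, so the certification event can only grow, pointwise in the true tally $T$ and hence at the risk-maximizing tally. Your reformulation of certification as the single nested event $\{\max_{n \leqslant m} S_n > h(\upsilon)\}$ on the full $m$-draw sample space is a slightly more careful rendering of the paper's informal claim that ``any sequence in $\Lambda$ remains in $\Lambda$'' (strictly, only its prefix does), but it is the same argument, not a different route.
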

\begin{proof}
If $\upsilon$ is increased (i.e.\ less stringent evidence threshold), then:
\begin{itemize}
\item Any sequence in $\Lambda$ remains in $\Lambda$, since if it passed the
    earlier, stricter threshold then it will also pass the newer, looser one.
    The only change will be that the audit possibly terminates earlier (i.e.\
    the sample size at which the audit stops is reduced).
\item Some sequences in $\bar\Lambda$ move to $\Lambda$, because they now meet
    the newer, relaxed threshold.
\end{itemize}
Therefore, overall there will be a shift of probability from $\bar\Lambda$ to
$\Lambda$.  Therefore the risk has increased.  Note that this is true for any
value of $T$, including for the value that maximises the risk.
\qed
\end{proof}
Note that \autoref{lemma:bayesmonotone} also holds for any audit procedure with
a stopping rule of the form, ``stop if $f(\text{sample data}) \geqslant
\upsilon$''.  A Bayesian audit is just one such procedure.

\begin{corollary}
For any prior, Bayesian ballot-polling audits of a two-candidate, no invalid
vote plurality contest can be calibrated to be risk-limiting, in the sense that
for any risk limit $\alpha \in (0,1)$, there exists $\upsilon \in (0, 1)$ such
that terminating the audit when the upset probability is not greater than
$\upsilon$ limits the risk of the procedure to $\alpha$.  (Typically, this
requires $\upsilon < \alpha$.)
\end{corollary}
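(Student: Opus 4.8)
The plan is to combine the two preceding lemmas to reduce the corollary to a one-dimensional monotonicity statement, and then to pin down the behaviour of the worst-case risk as $\upsilon \to 0$ so that existence of a suitable threshold follows. I would fix an arbitrary prior and view the maximum risk as a function of the single parameter $\upsilon$. By \autoref{lemma:maxrisk}, for every fixed $\upsilon$ the maximum of the (mis)certification probability over all null tallies ($T \leqslant \lfloor N/2 \rfloor$) is attained at the specific tally $T^\ast$ (the tie $T = N/2$ when $N$ is even, or $T = (N-1)/2$ when $N$ is odd). Hence it suffices to control the single quantity $R(\upsilon) := \Pr(\text{certify} \parallel T^\ast, N)$, the worst-case risk. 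By \autoref{lemma:bayesmonotone} (which, as noted there, applies to any stopping rule of the form ``certify once the upset probability is not greater than $\upsilon$''), $R$ is monotone increasing in $\upsilon$.

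Next I would establish the boundary behaviour at small $\upsilon$. Because the audit has a finite maximum sample size $m$ and the data are discrete, there are only finitely many sampled sequences with $n \leqslant m$; each yields a finite Bayes factor, equivalently a posterior upset probability bounded away from $0$. Let $\upsilon_{\min} > 0$ be the smallest upset probability achievable over all such sequences. For any $\upsilon < \upsilon_{\min}$ no sequence can meet the certification condition, so $\Lambda = \emptyset$ and $R(\upsilon) = 0$. Combined with monotonicity, this shows that for every $\alpha \in (0,1)$ the set $\{\upsilon \in (0,1) : R(\upsilon) \leqslant \alpha\}$ is nonempty, since it contains all $\upsilon < \upsilon_{\min}$; this is exactly the existence claim, and in practice one would take (near) the supremum of this set for the most efficient admissible audit. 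The parenthetical remark that typically $\upsilon < \alpha$ then follows from the SPRT correspondence computed earlier in this appendix: in the certify-only special case ($\phi = 1$) the risk limit equals $\upsilon / (1 - \upsilon) > \upsilon$, so attaining risk $\alpha$ forces $\upsilon = \alpha / (1 + \alpha) < \alpha$.

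The main obstacle is the boundary/discreteness step rather than the monotonicity: one must argue that the worst-case risk can genuinely be driven below $\alpha$, not merely that it increases with $\upsilon$. Since $R$ is a step function of $\upsilon$, one cannot in general attain risk exactly $\alpha$, only $\leqslant \alpha$, which is why the corollary is stated as a bound. I would also take care to verify that the finiteness argument applies under both sampling schemes (with and without replacement, via \eqref{eqn:seqprob-with} and \eqref{eqn:seqprob-without}) and that the worst-case tally $T^\ast$ supplied by \autoref{lemma:maxrisk} is used consistently across all $\upsilon$, since \autoref{lemma:bayesmonotone} guarantees monotonicity separately at each fixed $T$ (in particular at $T = T^\ast$).
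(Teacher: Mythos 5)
Your proposal follows the paper's own route exactly: combine \autoref{lemma:maxrisk} (the worst-case tally is the tie) with \autoref{lemma:bayesmonotone} (risk is monotone in $\upsilon$) and conclude that tightening $\upsilon$ drives the worst-case risk below $\alpha$. Your additional boundary step---using the finiteness of the sequence space under the cap $m$ to exhibit a $\upsilon_{\min}>0$ below which $\Lambda=\emptyset$ and $R(\upsilon)=0$---is a genuine improvement, since the paper's proof merely asserts that the risk ``can be reduced until'' it is below any prespecified limit without justifying that the infimum over $\upsilon$ is actually small enough.
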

\begin{proof}
\autoref{lemma:maxrisk} shows that any Bayesian audit has a bounded maximum
risk.  The monotonic relationship from \autoref{lemma:bayesmonotone} implies
that we can reduce this risk by imposing a stricter threshold on the upset
probability.  In particular, we can reduce it until the maximum risk is less
than any pre-specified limit.  Thus, we can use any Bayesian audit in a
risk-limiting fashion.
\qed
\end{proof}
To implement this in practice we need to be able calculate the maximum risk for
any given threshold and optimise the threshold value until this risk is smaller
than the specified limit.  This is straightforward for the two-candidate case
via either simulation or exact calculation, since we know the value of $T$ that
gives rise to the maximum risk.  Such a calculation would need to be done
separately for any given choice of sampling scheme and prior.


\section{KMart as a Bayesian audit}
\label{app:kmart-bayes}

This appendix shows a proof that for sampling with replacement, KMart is
equivalent to a Bayesian audit with a risk-maximising prior for the reported
winner's true vote tally, uniform on $(\frac{1}{2}, 1]$.  It also introduces a
more general version of the test statistic that corresponds to an arbitrary
risk-maximising prior.  Both results are shown for a simple two-candidate
contest.

\subsection{KMart is equivalent to a Bayesian audit}

As in \autoref{sec:other-methods}, we let $t = \E(X_1 \parallel H_0)$, the true
proportion of votes for the reported winner under the null hypothesis.  Since
we assume sampling without replacement, $\{X_i\}$ is a sequence of Bernoulli
trials with success probability $t$ if the null is true.  As explained
previously, we set $t = 0.5$ since that is the incorrect outcome for which the
risk of the method is largest.

In practice we will always have a finite number of total votes, and thus a
realistic model would have the support of $t$ be a discrete set (i.e.\ values
of the form $k / N$, where $k$ is the total number of votes in favour of the
reported winner).  However, for mathematical convenience here we will allow the
support of $t$ to be the unit interval, which is continuous.

To distinguish the test statistics ($S_n$) from KMart and Bayesian audits, we
will refer to the KMart statistic by $A_n$ and the Bayesian statistic by
$B_n$.

\subsubsection{KMart audits.}

For sampling with replacement, the KMart statistic is:
\[
A_n = \int_0^1 \prod_{i=1}^n \left(\gamma \left[\frac{X_i}{t} - 1\right]
    + 1\right) d\gamma.
\]
Since we are working with $t = \frac{1}{2}$, we can rewrite this expression,
\[
A_n = 2^n \int_0^1 \prod_{i=1}^n \left(\gamma \left[X_i - \frac{1}{2}\right]
    + \frac{1}{2}\right) d\gamma.
\]
For a specified risk limit, $\alpha$, the audit proceeds until $A_n
> 1 / \alpha$, at which point the election is certified ($H_0$ is rejected), or
is otherwise terminated in favour of doing a full manual tabulation.

\subsubsection{Bayesian audits.}

The Bayesian statistic is the BF:
\[
B_n = \frac{\Pr(X_1,\dots,X_n \mid H_1)}{\Pr(X_1,\dots,X_n \mid H_0)}.
\]
For sampling with replacement, the likelihood is the product of Bernoulli mass
functions and can be written completely in terms of $Y_n$, the total number of
sampled votes in favour of the winner,
\[
    \Pr(X_1,\dots,X_n \mid p) = p^{Y_n} (1 - p)^{n - Y_n}.
\]
We limit our discussion to risk-maximising prior distributions.  Reminder:
these place a probability mass of $\frac{1}{2}$ on the value of $p =
\frac{1}{2}$, and the remaining probability is over the set $p \in
(\frac{1}{2}, 1]$.  

The denominator of the BF is simple: the likelihood of the sample at the
(point) null value,
\[
  \Pr(X_1,\dots,X_n \mid H_0)
= \Pr\left(X_1,\dots,X_n \mid p = \tfrac{1}{2}\right)
= \left(\tfrac{1}{2}\right)^{Y_n} \left(\tfrac{1}{2}\right)^{n - Y_n}
= 1 / 2^n.
\]
The numerator requires integrating over the prior under $H_1$,
\[
  \Pr(X_1,\dots,X_n \mid H_1)
= \int_{\frac{1}{2}}^1 p^{Y_n} \left(1 - p\right)^{n - Y_n} f(p) \, dp.
\]
Putting these together gives,
\[
B_n = 2^n \int_{\frac{1}{2}}^1 p^{Y_n} \left(1 - p\right)^{n - Y_n} f(p) \, dp.
\]
Similar to KMart, a Bayesian audit proceeds until $B_n > 1 / \alpha$.

\subsubsection{Equivalence.}

Both $A_n$ and $B_n$ are expressed as integrals but with the $X_i$ in different
`places' in the integrand.  The key to showing they are equivalent is to notice
that the $X_i$ are binary variables, which allows us to set up an identity that
relates the two ways of writing the integral.  Specifically, we have the
following identity,
\[
  \gamma \left(X_i - \frac{1}{2}\right) + \frac{1}{2}
= \left(\frac{1 + \gamma}{2}\right)^{X_i}
  \left(\frac{1 - \gamma}{2}\right)^{1 - X_i}.
\]
This allows us to rewrite $A_n$,
\[
A_n
= 2^n \int_0^1 \left(\frac{1 + \gamma}{2}\right)^{Y_n}
    \left(\frac{1 - \gamma}{2}\right)^{n - Y_n} d\gamma
= \int_0^1 \left(1 + \gamma\right)^{Y_n}
    \left(1 - \gamma\right)^{n - Y_n} d\gamma.
\]
Next, let $\gamma = 2p - 1$ and change the variable of integration,
\[
A_n
= \int_{\frac{1}{2}}^1 (2p)^{Y_n} (2 - 2p)^{n - Y_n} 2 \, dp
= 2^n \int_{\frac{1}{2}}^1 p^{Y_n} \left(1 - p\right)^{n - Y_n} 2 \, dp.
\]
Finally, note that this is identical to $B_n$ if we set the prior to be uniform
over $H_1$, i.e.\ $f(p) = 2$.

In other words, a KMart audit is equivalent to a Bayesian audit that uses a
risk-maximising prior that is uniform on $(\frac{1}{2}, 1]$.

\subsection{Extending KMart to arbitrary priors}

From the above result, we can see that $\gamma$ plays a similar role to $p$.
The somewhat arbitrary integral over $\gamma$ used to define $A_n$ can be
generalised by specifying a weighting function $g(\gamma)$,
$$
A_n = \int_0^1 \prod_{i=1}^n \left(\gamma \left[\frac{X_i}{t} - 1\right]
    + 1\right) g(\gamma) \, d\gamma.
$$
Applying the same transformations as above gives
$$
A_n = 2^n \int_{\frac{1}{2}}^1 p^{Y_n} \left(1 - p\right)^{n - Y_n}
    2 \times g(2p - 1) \, dp.
$$
In other words, this generalised version of KMart is equivalent to a Bayesian
audit with the following risk-maximising prior:
$$
f(p) = 2 \times g(2p - 1).
$$
The original KMart is the special case where $g(\cdot) = 1$.

\subsection{Efficient computation by exploiting the equivalence}

We can use the above equivalence to develop fast ways to compute the KMart
statistic (when sampling with replacement), by relating it to standard Bayesian
calculations using conjugate priors.

First, we show that if we take a conjugate prior distribution, truncate it, and
add some point masses, the resulting distribution is still conjugate.  Then we
use this result to write a formula for the posterior distribution for the same
case as above (simple two-candidate election, sampling with replacement).

\subsubsection{Truncation and point masses preserve conjugacy.}

(The proofs shown here are not too hard to derive and may well be described
elsewhere.)

Suppose we have a single parameter, $\theta$, some data, $D$, a likelihood
function, $L(\theta \mid D)$, and a conjugate prior distribution, $f(\theta)$.
That means we have,
$$
f(\theta \mid D) \propto L(\theta \mid D) f(\theta).
$$
Let the normalising constant be,
$$
k = \int L(\theta \mid D) f(\theta) d\theta.
$$
This allows us to express the posterior as,
$$
f(\theta \mid D) = \frac{1}{k} L(\theta \mid D) f(\theta),
$$
The sections that follow each start with these definitions and transform the
prior in various ways.

\paragraph{Truncation.}

Truncate the prior to a subset $S$ (i.e.\ we only allow $\theta \in S$).  Write
this truncated prior as,
$$
f^*(\theta) = f(\theta) \frac{I_S(\theta)}{z_S},
$$
where $I_S(\theta)$ is the indicator function that takes value 1 when $\theta
\in S$, and $z_S = \int f(\theta) I_S(\theta) d\theta = \int_S f(\theta)
d\theta$ is the normalising constant due to truncation.

If we use this prior, we get the posterior
$$
f^*(\theta \mid D) = \frac{1}{k^*} L(\theta \mid D) f^*(\theta),
$$
where,
$$
k^* = \int L(\theta \mid D) f^*(\theta) d\theta.
$$
Expanding this out gives,
$$
f^*(\theta \mid D) = \frac{1}{k^* z_S} L(\theta \mid D) f(\theta) I_S(\theta)
           = \frac{k}{k^* z_S} f(\theta \mid D) I_S(\theta).
$$
This is the original posterior truncated to $S$.  Thus, the truncation results
in staying within the same family of (truncated) probability distributions,
which means this family is conjugate.

\paragraph{Adding a point mass.}

Define a `spiked' prior where we add a point mass at $\theta_0$,
$$
f^*(\theta) = u \, \delta_{\theta_0}(\theta) + v f(\theta),
$$
where $u + v = 1$.  In other words, a mixture distribution with mixture weights
$u$ and $v$.  The normalising constant is,
$$
k^* = \int L(\theta \mid D) f^*(\theta) d\theta = u L(\theta_0 \mid D) + v k.
$$
We can write the posterior as,
$$
f^*(\theta \mid D) = \frac{1}{k^*} L(\theta \mid D) f^*(\theta)
           = \frac{u \, L(\theta_0 \mid D)}{k^*} \delta_{\theta_0}(\theta) +
             \frac{v k}{k^*} f(\theta \mid D).
$$
This is a `spiked' version of the original posterior.  You can see this more
clearly by defining,
$$
u^* = \frac{u \, L(\theta_0 \mid D)}{k^*}, \quad v^* = \frac{v k}{k^*},
$$
where $u^* + v^* = 1$.  Thus, `spiking' a distribution results in a conjugate
family.  Note that the mixture weights get updated as we go from the prior to
the posterior.

\paragraph{Truncating and adding point masses.}

We can combine both of the previous operations and we will still retain
conjugacy.  In fact, due to the generality of the proof, we can apply each one
an arbitrary number of times, e.g.\ to add many point masses.

\subsubsection{Application to KMart.}

When sampling with replacement, the conjugate prior for $p$ is a beta
distribution.

We showed earlier that KMart was equivalent to using a risk-maximising prior.
Starting with any beta distribution, we can form the corresponding
risk-maximising prior by truncating to $p \in (\frac{1}{2}, 1]$ and adding a
probability mass of $\frac{1}{2}$ at $p = \frac{1}{2}$.  Based on the argument
presented above, this prior is conjugate.  Moreover, we can express the
posterior in closed form.

Let the original prior be $p \sim \text{Beta}(a, b)$.  The risk-maximising
prior retains the functional form of this prior for $p > \frac{1}{2}$ and also
has a mass of $\frac{1}{2}$ at $p = \frac{1}{2}$.

After we observe a sample of size $n$ from the audit, we have a posterior with
an updated probability mass at $p = \frac{1}{2}$.  This mass will be the upset
probability.  We can derive an expression for it using equations similar to
above (it will correspond to $u^*$ using the notation from above).

Let $f(p)$ be the pdf of the original beta prior, $F(p)$ be its cdf, $S =
(\frac{1}{2}, 1]$ the truncation region, $F'(p)$ the cdf of the
beta-distributed portion of the posterior (i.e.\ the posterior distribution if
we use the original beta prior), and $B(\cdot, \cdot)$ be the beta function.
We have,
$$
k^* = \frac{1}{2} \left(\frac{1}{2}\right)^n + \frac{1}{2} \frac{k'}{z_S},
$$
where
$$
z_S = \int_\frac{1}{2}^1 f(p) dp = 1 - F\left(\frac{1}{2}\right)
$$
and
$$
k' = \int_\frac{1}{2}^1 L(p \mid D) f(p) dp
   = \frac{B(Y_n + a, n - Y_n + b)}{B(a, b)}
     \left(1 - F'\left(\frac{1}{2}\right)\right).
$$
Putting these together gives,
$$
k^* = \frac{1}{2^{n + 1}} +
      \frac{1}{2} \times
      \frac{B(Y_n + a, n - Y_n + b)}{B(a, b)} \times
      \frac{1 - F'\left(\frac{1}{2}\right)}
           {1 - F\left(\frac{1}{2}\right)}.
$$
The upset probability is,
$$
u^* = \frac{\frac{1}{2^{n + 1}}}{k^*}.
$$
These quantities will be straightforward to calculate as long we have efficient
ways to calculate:
\begin{enumerate}
\item The beta function.
\item The cdf of a beta distribution.
\end{enumerate}
Both have fast implementations in R.\footnote{\url{https://www.r-project.org/}}


\end{document}